\documentclass[reqno]{amsproc}

\usepackage{amstext,amsmath,amssymb,amsfonts}
\usepackage[latin1]{inputenc}
\usepackage{epsfig}
\usepackage{hyperref}
\usepackage{color}

\usepackage{latexsym}



\newtheorem{theorem}{Theorem}

\textwidth 149mm
\textheight 225mm
\topmargin -10mm
\oddsidemargin 8mm

\newcommand{\bea}{\begin{eqnarray}}
\newcommand{\eea}{\end{eqnarray}}
\newcommand{\beq}{\begin{equation}}
\newcommand{\eeq}{\end{equation}}
\newcommand{\enn}{\nonumber \end{equation}}

\title[Solving Navier-Stokes equations coupled with a heat transfer equation]
{Solving Navier-Stokes equations coupled with a heat transfer equation using Bagarello's approach and the Hankel transform\footnote{Preprint: ICMPA-MPA/2015/11 }}

\author{Mahouton Norbert Hounkonnou}
\address[M.N.H.]{International Chair in Mathematical Physics and Applications,
ICMPA-UNESCO Chair, 072 BP 50, Cotonou, Rep. of Benin}
\email{norbert.hounkonnou@cipma.uac.bj}

\author{Villevo Adanhounm\`e}
\address[V.A.]{International Chair in Mathematical Physics and Applications,
ICMPA-UNESCO Chair, 072 BP 50, Cotonou, Rep. of Benin}
\email{adanhounmvillvo@yahoo.fr}%

\author{Jean Ghislain Compaor\'e}
\address[J.G.C.]{International Chair in Mathematical Physics and Applications,
ICMPA-UNESCO Chair, 072 BP 50, Cotonou, Rep. of Benin}
\email{ghislaincompaore@yahoo.fr}

\begin{document}

\maketitle 
 \begin{abstract}
  In this paper, the dynamics of an incompressible fluid  in a bounded connected domain,  described by Navier-Stokes equations coupled with a heat transfer equation, is investigated by a method inspired from the non-commutative strategy developed by Bagarello,  
(see {\it Int. Jour. of Theoretical Physics}, \textbf{43}, issue 12 (2004), 
p. 2371 - 2394).
The solution of involved systems of partial differential equations is derived with the help of the unbounded self-adjoint densely defined Hamiltonian operator of the physical model and the Hankel transform.
\\
Keywords : Heat transfer equation, Navier-Stokes equations, Bagarello's operator method, Hankel transform.
\end{abstract}
\today

\tableofcontents

\section{ Introduction}
 Various models of motion  are used to describe  physical
 phenomena in fluid mechanics. See for instance  the work based on Navier-Stokes equations reported in Refs.\cite{Eglit, Lu}. Unfortunately, due to the complexity of the involved
 systems of differential equations, most of these models are not exactly solvable. This explains the massive  use of 
approximation and numerical schemes to obtain information on the flow phenomena such as the velocity distribution, flow pattern, pressure losses, and so on. To cite a few, see, e. g.,  \cite{Kharrat} -  \cite{Korn}.

 In this paper, we show that the problem of the dynamics of an incompressible fluid in a bounded connected domain, described by   Navier-Stokes equations coupled with a heat transfer equation, can be analytically solved combining both the operator method developed by Bagarello \cite{Bagarello97} and the Hankel transform. 

 The  paper is organized as follows. In   section 2, 
 the considered model is described. In section 3, we obtain the exact solution to the
heat transfer equation by the Bagarello's non-commutative approach. The 
 Navier-Stokes equations are solved by using the Hankel transform. Then follow the concluding remarks  in section 4.

\section{Mathematical formulation}
We consider cylindrical coordinates $x=(r,\varphi, z)$ relative to the orthogonal basis $(e_r, e_{\varphi}, e_z)$ and
  incompressible viscous fluid flow with the $z$-axis  symmetry.  
The fluid layer is confined between two parallel plates of horizontal lengths $L_x$ and $L_y$ separated by a vertical $(z)$ distance $(d)$. The no-slip upper and lower plates are held at fixed temperatures $T_1$ and $T_2$ respectively. A uniform volumetric heat flux $H$ (with units power/volume ) is pumped into the layer. The governing equations of such a flow of an incompressible fluid in the standard Boussinesq approximation are Eglit\cite{Eglit} ,Lu\cite{Lu}
\begin{eqnarray}
\dfrac{\partial\bold u}{\partial t} +\bold u.\nabla \bold u&=& -\nabla p+ \nu\nabla^2\bold u + (g\alpha T)\widehat{\bold k}\quad \rm{in}\quad \Omega^{\tau}\equiv \Omega\times (0, \tau),\label{1}\\
\dfrac{\partial T}{\partial t} +\bold u. \nabla T&=&\kappa\nabla^2 T +\gamma \quad \rm{in}\quad \Omega^{\tau}, \label{2}\\
\nabla.\bold u&=&0 \quad \rm{in}\quad \Omega^{\tau},\label{3}
\end{eqnarray}
with the initial and boundary conditions
\begin{eqnarray}
\bold u|_{t=0}&=& \bold u_0,\,\,\, T|_{t=0}= T_0\quad \rm{in}\quad \Omega, \label{4}\\
\bold u&=&0 \quad \rm{in}\quad \Gamma^{\tau}\equiv \Gamma\times (0,\tau),\label{5}\\
T|_{z=0}&=&T_1,\,\,\,T|_{z=d}= T_2, \label{6}
\end{eqnarray}
where 
$\Omega $ is a closed bounded set of $\mathbf{R}^3$ with the Lipschitz continuous boundary $\Gamma$, $\bold u(t, x)= (u(t, r, z),0, v(t, r, z))$ is the velocity of the fluid, $p$ is the fluid pressure, $T$ is the temperature,  $\widehat{\bold k}=(0, 0, 1)$, 
$\nu$ is the viscosity, $g$ is the acceleration of gravity along the $z$-axis (in the $-\widehat{\bold k}$ direction), $\alpha$ is the thermal expansion coefficient, $\kappa$ is the thermal diffusion coefficient.  $\gamma=H/(\rho c);$  $\rho$ is the density and $c$ is the specific heat capacity of the fluid. 
\section{Analytical solution}
In this section, we investigate the analytical solutions to the above described model based on  Navier-Stokes equations using an operator approach and the Hankel transform. 

As matter of commodity, we adopt  $d^2/\kappa$ as the unit of time, $d$ as the unit of length and $\gamma d^2/\kappa$ as the unit of temperature. The  equations can then be put into the following dimensionless form: 
\begin{eqnarray}
Pr^{-1}\bigg(\dfrac{\partial\bold u}{\partial t} +\bold u.\nabla \bold u\bigg)&=& -\nabla p+ \nabla^2\bold u + R T \widehat{\bold k},\label{7}\\
\dfrac{\partial T}{\partial t} +\bold u. \nabla T&=& \nabla^2 T + 1, \label{8}
\end{eqnarray}
where $Pr=\nu/\kappa$ is the Prandtl number and $R=g\alpha d^5\gamma/(\kappa^2\nu)$ is the heat Rayleigh number. The boundary conditions now read:
\begin{eqnarray}
\bold u=0 \quad \rm{in}\quad \overline{\Gamma}^{\tau}\equiv \overline\Gamma\times (0,\tau),\,\,
T|_{z=0}=\widetilde T,\,\,\,T|_{z=1}= 0, \label{9}
\end{eqnarray}
where $ \overline\Gamma$ is the boundary obtained from $\Gamma$ by replacing $z=d$ by $z=1$,
$\widetilde T=(\kappa/\gamma d^2)(T_1-T_2)$. 

We suppose that the solution of Cauchy problem exists and is 
unique. Then, we solve the problem by Bagarello's approach \cite{Bagarello97}, 
based on a suggestion drawn from quantum mechanics. It is based on the following: given a 
quantum mechanical system $S$ and the related set of observables 
$O_S,$ i. e., the set of all the self-adjoint bounded (or more often unbounded) 
operators describing $S$, the evolution of any observable $Y\in O_S$ 
satisfies the Heisenberg equation of motion (HOEM):
\begin{eqnarray}
\dfrac{\partial}{\partial t} Y (t, x) = i[H,Y(t, x)]. \end{eqnarray}
Here $[A,B] = AB- BA$ is the commutator between $A, B\in O_S$; $H$ is 
assumed to be a densely defined self-adjoint Hamiltonian operator of 
the system acting on some Hilbert space $\mathcal{H}$ while the initial 
condition $Y^0=h(x)$ is considered as an operator acting on the same
Hilbert space $\mathcal{H}$.
Then, the following statement holds:
  \begin{theorem}. A formal solution of the HOEM is 
$ Y(t,x)=e^{itH}Y^0e^{-itH},$ where $Y^0$ is the initial condition of 
$Y(t,x)$ and $H$ does not depend explicitly on time.
 Furthermore, if $H$ is bounded, we get $Y (t,x) =\sum_{k=0}^{\infty}
\dfrac{ [it)^k}{k!}[H,Y^0]_k,$ where $[A,B]_k$ is the multiple commutator 
defined recursively as : $[A,B]_0 = B; [A,B]_k = [A,[A,B]_{k-1}],$ etc.
\end{theorem} 
\begin{proof} We have  \begin{eqnarray*}
\dfrac{\partial}{\partial t} Y (t,x)&=&\dfrac{\partial}{\partial t}[e^{itH}]Y^0e^{-itH}+
e^{itH}Y^0\dfrac{\partial}{\partial t}[e^{-itH}]\Longleftrightarrow\\
\dfrac{\partial}{\partial t} Y (t,x)&=& iHe^{itH}Y^0e^{-itH}+
e^{itH}Y^0(-iH)e^{-itH}\Longleftrightarrow\\
\dfrac{\partial }{\partial t} Y (t,x)&=& i[H, Y (t,x)] \end{eqnarray*}
If $H$ is bounded, $Y(t,x)=e^{itH}Y^0e^{-itH}$ can be expanded as
\begin{eqnarray*} Y (t,x) =\sum_{k=0}^{\infty} \dfrac{(it)^k}{k!}[H,Y^0]_k.
\end{eqnarray*} \end{proof}$\Box$

Consider the finite Hankel transform of $\Phi$ defined by \cite{Korn}
\begin{eqnarray}
\overline\Phi(t, \mu_n )&=&\int_0^{\xi_0}\xi\Phi( t,\xi )J_{3/2}(\mu_n\xi)d\xi,\\
\overline\Phi(t, \mu_n)|_{t=0}&=&\int_0^{\xi_0}\xi^{5/2}\Omega_0(\xi)J_{3/2}(\mu_n\xi)d\xi,\\
\Phi(t, \xi )&=&\dfrac{2}{\xi^2_0}\sum_{n\geq 1}\dfrac{J_{3/2}(\mu_n\xi)}{[J^{\prime}_{3/2}(\mu_n\xi_0)]^2}\overline{\Phi}(t, \mu_n ),
\end{eqnarray}
where $J_{3/2}$ is the Bessel function of the first kind with the $3/2$-order, $\mu_1, \mu_2, \mu_3,\ldots,\mu_n,\ldots$ are the positive roots of the equation
\begin{equation}
J_{3/2}(\mu\xi_0)= 0, \,\,\,\xi_0=\rm{const}.
\end{equation}
Defining the stream function $\psi$ and the component of vorticity by $$u=-\dfrac{1}{r}\dfrac{\partial\psi}{\partial z},\,\,\, v= \dfrac{1}{r}\dfrac{\partial\psi}{\partial r},\,\,\,\omega=\dfrac{\partial u}{\partial z}-\dfrac{\partial v}{\partial r}
$$ and taking the operator $\rm{rot}$ of the equation $(\ref{7})$, we obtain \cite{Eglit}
\begin{eqnarray}
Pr^{-1}\bigg(\dfrac{\partial\omega }{\partial t}+\dfrac{\partial (\omega/r, \psi)}{\partial (r, z)}\bigg)&=&\dfrac{1}{r^2}\Bigg[\dfrac{\partial }{\partial r}\bigg(r^3\dfrac{\partial (\omega/r)}{\partial r}\bigg)+\dfrac{\partial }{\partial z}\bigg(r^3\dfrac{\partial (\omega/r)}{\partial z}\bigg)\Bigg]
- R\dfrac{\partial T}{\partial r}\label{10},\\
\dfrac{\partial T}{\partial t} +\dfrac{1}{r}
\dfrac{\partial ( \psi, T)}{\partial (r, z)}&=& 
\dfrac{1}{r}\dfrac{\partial}{\partial r}\bigg(r\dfrac{\partial T}{\partial r}\bigg) +\dfrac{\partial^2 T}{\partial z^2}+ 1,\label{11}
\end{eqnarray}
where 
\begin{eqnarray}
\dfrac{\partial ( A, B)}{\partial (r, z)}&:=&\dfrac{\partial A}{\partial r}\dfrac{\partial B }{\partial z}-\dfrac{\partial A }{\partial z }\dfrac{\partial B }{\partial r}.
\end{eqnarray}
Using the transformation $\xi=\sqrt{r^2+z^2}$ and setting 
$\Omega(\xi,t)= \omega(r,z, t)/r,\,\,\,\widehat T(\xi, t)=T(r,z,t)- t$, the equations (\ref{10}) and (\ref{11}) become
\begin{eqnarray}
{Pr}^{-1}\dfrac{\partial\Omega }{\partial t}&=&
\dfrac{\partial^2\Omega}{\partial\xi^2}+\dfrac{4}{\xi}\dfrac{\partial\Omega}{\partial \xi}- \dfrac{R}{\xi}\dfrac{\partial\widehat T}{\partial\xi},\label{12}\\
\dfrac{\partial(\xi\widehat T)}{\partial t} &=&\dfrac{\partial^2(\xi\widehat T)}{\partial\xi^2}.\label{13}
\end{eqnarray}
Using Bagarello's approach, the solution of the equation (\ref{13}) satisfying the initial condition  can be expressed as: 
\begin{eqnarray}
\xi\widehat T(\xi, t)&=&\sum_{n\geq 0}\dfrac{(it)^n}{n!}(-i)^n\Delta^n [\xi T_0(r, z)]
=\sum_{n\geq 0}\dfrac{ t^n}{n!}\Delta^n[\xi T_0(r,z)]\Longleftrightarrow\\
T(\xi, t)&=& t+\dfrac{1}{\xi}\sum_{n\geq 0}\dfrac{t^n}{n!}\Delta^n[\xi T_0(r,z)],
\end{eqnarray}
where $\Delta$ is the Laplacian operator.
Using the change of variable $\Omega=\xi^{-3/2}\Phi,\,\,\Phi=\Phi(t, \xi )$ in the equation (\ref{12}) we obtain
\begin{eqnarray}\label{eq22}
{Pr}^{-1}\dfrac{\partial\Phi}{\partial t}&=&
\dfrac{\partial^2\Phi}{\partial\xi^2}+\dfrac{1}{\xi}\dfrac{\partial\Phi}{\partial\xi}-\dfrac{(3/2)^2}{\xi^2}\Phi -R\xi^{1/2}\dfrac{\partial\widehat T}{\partial\xi}\label{14},\\
\Phi|_{t=0}&=&\xi^{3/2}\Omega_0(\xi),
\end{eqnarray}
where $\Omega_0(\xi)=(\omega/r)|_{t=0}$.\\
Applying the Hankel transformation to the equation (\ref{eq22}) and taking account of the boundary condition (\ref{9}), i. e. $\Phi(t, \xi_0 )=0$, we get
\begin{eqnarray}
Pr^{-1}\dfrac{d\overline\Phi}{dt}&=& -\mu^2_n\overline{\Phi}-\overline{\Upsilon}(t, \mu_n )\label{15},\\
\overline\Phi|_{t=0}&=&\overline\Phi(0, \mu_n ), \label{16}
\end{eqnarray}
and
\begin{eqnarray}
\overline\Upsilon(t, \mu_n )= R \int_0^{\xi_0}
\xi^{3/2}\dfrac{\partial\widehat T}{\partial\xi}(t,\xi )J_{3/2}(\mu_n\xi)d\xi.
\end{eqnarray} 
Thus the solution of equation (\ref{15}) satisfying the initial condition (\ref{16}) can be expressed as 
\begin{eqnarray}
\overline\Phi(t, \mu_n )= \overline\Phi(0, \mu_n )e^{-Pr\mu^2_n t}- Pr \int_0^t e^{-Pr\mu^2_n( t-\tau)}
\overline\Upsilon(\tau, \mu_n)d\tau
\end{eqnarray}
and 
\begin{eqnarray}
\Omega(t, \xi )=\dfrac{2}{\xi^2_0}\xi^{-3/2}\sum_{n\geq 1}\dfrac{\overline\Phi(t, \mu_n )}{[J^{\prime}_{3/2}(\mu_n\xi_0)]^2}J_{3/2}(\mu_n\xi).
\end{eqnarray}
Taking account of the expression of $\omega$
we obtain the equation with the unknown $\psi$
\begin{eqnarray}
 \dfrac{\partial^2\psi}{\partial\xi^2} 
 =\dfrac{-2r^2}{\xi^2_0}\xi^{-3/2}\sum_{n\geq 1}
  \dfrac{\overline\Phi(t, \mu_n )}{[J^{\prime}_{3/2}(\mu_n\xi_0)]^2}J_{3/2}(\mu_n\xi)
  \end{eqnarray}
and the derivative of stream function $\psi$, components $u$ and $v$ can be defined by
\begin{eqnarray*}
 \dfrac{\partial\psi}{\partial\xi}&=&\dfrac{-2r^2}{\xi^2_0}\int_0^\xi \eta^{-3/2}\sum_{n\geq 1}
 \dfrac{\overline{\Phi}(t, \mu_n )}{[J^{\prime}_{3/2}(\mu_n\xi_0)]^2}J_{3/2}(\mu_n\eta)d\eta, \\
 u(r,z,t)&=& \dfrac{2rz}{\xi\xi^2_0} \int_0^\xi\eta^{-3/2}\sum_{n\geq 1}
\dfrac{\overline{\Phi}(t, \mu_n )}{[J^{\prime}_{3/2}(\mu_n\xi_0)]^2}J_{3/2}(\mu_n\eta)d\eta, \\
 v(r,z,t)&=& \dfrac{-2r^2}{\xi\xi^2_0}\int_0^\xi\eta^{-3/2}\sum_{n\geq 1}
\dfrac{\overline{\Phi}(t, \mu_n )}{[J^{\prime}_{3/2}(\mu_n\xi_0)]^2}J_{3/2}(\mu_n\eta)d\eta.
\end{eqnarray*}
Using Hankel transformation and Bagarello'approach, we provide analytical solutions to the nonlinear 
Navier-Stokes equations and to heat transfer equation  (\ref{1}) and (\ref{2}) satisfying the initial and boundary conditions (\ref{3})-(\ref{6}).We can state the main result as:
\begin{theorem}
The velocity components $u,\,\,v$, the vorticity component $\omega$ and the temperature $T$ of the incompressible fluid flow described by the Navier-Stokes equations coupled with the heat transfer equation are defined by 
\begin{eqnarray}
 u(r,z,t)&=& \dfrac{2rz}{\xi\xi^2_0} \int_0^\xi\eta^{-3/2}\sum_{n\geq 1}
\dfrac{\overline{\Phi}(t, \mu_n )}{[J^{\prime}_{3/2}(\mu_n\xi_0)]^2}J_{3/2}(\mu_n\eta)d\eta, \\
 v(r,z,t)&=& \dfrac{-2r^2}{\xi\xi^2_0}\int_0^\xi\eta^{-3/2}\sum_{n\geq 1}
\dfrac{\overline{\Phi}(t, \mu_n )}{[J^{\prime}_{3/2}(\mu_n\xi_0)]^2}J_{3/2}(\mu_n\eta)d\eta, \\
 \omega(r,z, t)&=&\dfrac{2r}{\xi^2_0}\xi^{-3/2}\sum_{n\geq 1}
\dfrac{\overline{\Phi}(t, \mu_n )}{[J^{\prime}_{3/2}(\mu_n\xi_0)]^2}J_{3/2}(\mu_n\xi), \\
T(r, z, t)&=& t+\dfrac{1}{\xi}\sum_{n\geq 0}\dfrac{t^n}{n!}\Delta^n[\xi T_0(r,z)].
\end{eqnarray}
\end{theorem}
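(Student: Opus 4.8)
The plan is to certify the four displayed formulas by reassembling, in order, the reductions carried out above, checking at each stage that the transformation is legitimate and that the initial and boundary data are respected; nothing new need be discovered, only verified as a genuine solution. First I would recast the dimensionless system (\ref{7})--(\ref{8}) in the vorticity--stream-function variables. Introducing $\psi$ and $\omega$ as defined above automatically enforces the incompressibility constraint (\ref{3}), and applying the operator $\mathrm{rot}$ to the momentum equation eliminates the pressure gradient, producing the transport--diffusion pair (\ref{10})--(\ref{11}) in which the buoyancy feeds the term $-R\,\partial T/\partial r$ into the vorticity balance.

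Next I would pass to the similarity variable $\xi=\sqrt{r^2+z^2}$ together with $\Omega=\omega/r$ and $\widehat T=T-t$. Here I expect the main obstacle: one must show that, under the imposed $z$-axis symmetry and the radial ansatz, the nonlinear Jacobians $\partial(\omega/r,\psi)/\partial(r,z)$ and $\partial(\psi,T)/\partial(r,z)$ drop out, so that the genuinely nonlinear system collapses to the \emph{linear} pair (\ref{12})--(\ref{13}). The entire operator-and-transform apparatus that follows rests on this cancellation, and it is the step demanding the most careful justification, including the compatibility of the shift $\widehat T=T-t$ with the constant source $+1$ in (\ref{11}) and the exact passage of the anisotropic operators in (\ref{10})--(\ref{11}) to the $\xi$-only operators in (\ref{12}).

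With linearity secured, the temperature is immediate: equation (\ref{13}) is a one-dimensional heat equation for $\xi\widehat T$, so Theorem 1 applies with the Hamiltonian chosen so that $i[H,\cdot]$ reproduces the diffusion operator, i.e. (up to the factor $-i$) the Laplacian $\Delta$; this yields $\xi\widehat T=\sum_{n\ge 0}(t^n/n!)\Delta^n[\xi T_0]$, and dividing by $\xi$ while reinstating the linear-in-$t$ part gives the stated $T(r,z,t)$. For the vorticity I would substitute $\Omega=\xi^{-3/2}\Phi$ to obtain the Bessel-type equation (\ref{14}), whose radial part is diagonalized by the finite Hankel transform of order $3/2$; transforming and imposing $\Phi(t,\xi_0)=0$ reduces (\ref{14}) to the scalar linear ODE (\ref{15}), which I solve by the integrating factor $e^{Pr\,\mu_n^2 t}$ to obtain $\overline\Phi(t,\mu_n)$. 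Inverting the transform recovers $\Omega$, hence $\omega=r\Omega$.

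Finally I would integrate the relation $\partial^2\psi/\partial\xi^2=-r^2\Omega$ once in $\xi$ and apply the chain rule $\partial_z=(z/\xi)\partial_\xi$, $\partial_r=(r/\xi)\partial_\xi$ for $\xi$-dependent fields inside $u=-r^{-1}\partial_z\psi$ and $v=r^{-1}\partial_r\psi$; the factors $z/\xi$ and $r/\xi$ then generate exactly the prefactors $2rz/(\xi\xi_0^2)$ and $-2r^2/(\xi\xi_0^2)$ of the claimed expressions, completing the verification. The remaining technical points — term-by-term differentiation of the Hankel series, convergence of the Bagarello operator series, and the no-slip condition $u=v=0$ on $\overline\Gamma$ — I would dispatch by invoking the assumed existence, uniqueness and regularity of the Cauchy problem posited before Theorem 1.
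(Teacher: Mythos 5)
Your proposal retraces the paper's own derivation essentially step for step: the same vorticity--stream-function reduction, the same passage to the variable $\xi=\sqrt{r^2+z^2}$ with $\Omega=\omega/r$ and $\widehat T=T-t$, Bagarello's operator series for the reduced heat equation, the finite Hankel transform of order $3/2$ for the vorticity equation, and the final integration in $\xi$ to recover $u$ and $v$. The one step you single out as requiring careful justification --- the disappearance of the nonlinear Jacobian terms so that the system becomes linear in the $\xi$ variable --- is indeed the crux, and the paper itself passes over it without proof, so your outline matches the paper's argument in both structure and level of rigor.
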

 \section{Concluding remarks}
 In this paper, we  succeeded in solving the problem of  fluid dynamics described by the Navier-Stokes equations coupled with the heat transfer equation. These equations were transformed  in forms suitable to apply the Bagarello's operator approach and the Hankel transform.   Especially, we computed the velocity, the vorticity and the temperature of the fluid flow  characterizing  the fluid dynamics.

\end{document}